\newtheorem{thm}{Theorem}[section]
\theoremstyle{definition}
\newtheorem{defn}[thm]{Definition}
\newtheorem{remark}[thm]{Remark}
\newtheorem{exam}[thm]{Example}
\DeclareMathOperator{\Tr}{Tr}
\begin{document}

\title[]{Axiomatic Approach to Quantum Superchannels}

\author[P.~C.~Daly]{Pádraig C.~Daly}
\address{Institute for Quantum Computing and Department of Pure Mathematics, University of Waterloo,
Waterloo, ON, Canada  N2L 3G1}
\email{pcdaly@uwaterloo.ca}

\maketitle
\mbox{}\vspace*{-1\baselineskip}
\begin{abstract} Quantum superchannels are maps whose input and output are quantum channels. Rather than taking the domain to be the space of all linear maps we motivate and define superchannels on the operator system spanned by quantum channels. Extension theorems for completely positive maps allow us to apply the characterisation theorem for superchannels to this smaller set of maps. These extensions are non unique, showing two different superchannels act the same on all input quantum channels, and so this new definition on the smaller domain captures more precisely the action of superchannels as transformations between quantum channels. The non uniqueness can affect the auxilliary dimension needed for the characterisation as well as the tensor product of the superchannels.

\end{abstract}

\section{Introduction}
Quantum channels are a fundamental object studied in quantum information \cite{Holevo+2012, nielsen_chuang_2010}. Defined as completely positive trace-preserving (CPTP) maps between operators on Hilbert spaces, they map quantum states to quantum states. Since quantum states are positive operators with trace one, the natural domain and range of quantum channels is taken to be the ideal of trace class operators inside the space $\mathcal{B}(\mathcal{H})$ of bounded operators on a Hilbert space. Quantum superchannels are one step up from this, transformations between quantum channels.

Quantum superchannels were introduced in \cite{original} to describe the most general transformation of quantum channels, and have been used as a model of quantum circuit boards with the ability to replace quantum channels \cite{chiribella2008quantum}. Recent work has used superchannels to define the entropy of quantum channels \cite{Gour}, and study dynamical resource theories such as entanglement \cite{gour2021entanglement}, magic \cite{wang2019quantifying}, and coherence \cite{liu2020operational}.  Concepts from quantum channels, such as entanglement-breaking and dephasing, have been extended to the superchannel case \cite{chen2020entanglement, puchala2021dephasing} to understand how these properties can be introduced as channels change.

In \cite{original} and \cite{Gour} the domain of these superchannels is taken to be the set of quantum operations which are completely positive trace non-increasing maps between operators. In finite dimensions the span of these maps gives all of the linear maps on spaces of operators. A characterisation of all superchannels in these papers describes them as the action of two ordinary quantum channels, a ``pre-processing" and ``post-processing" channel.

In this paper we propose a different definition of superchannels. In particular, we take the domain to be the span of quantum channels which in general is not all linear maps between operators. Considering the Choi matrices \cite{CHOI1975285} of such maps allows us to define an operator system and make use of Stinespring's theorem \cite{stinespring}. Arveson's extension theorem \cite[Theorem 1.2.3]{Arveson1969OnSO} allows the same characterisation of the more general superchannels to apply to this smaller class of maps.

We then show that these extensions are non unique, meaning different extensions give superchannels whose action on quantum channels is the same. This shows that the usual definition of superchannel results in different maps which have the same effect on quantum channels. This provides evidence that this new definition is more natural as a description of maps on channels.

Consequences of the non uniqueness of these extensions are then explored. It is shown there are no TP extensions, and that the tensor product can be affected. The extreme points of the set of extensions is examined in a generalisation of a theorem by Choi \cite{CHOI1975285}.

\section{Preliminaries}
For notation let $M_n$ be the space of $n\times n$ matrices over the complex numbers and let $B(\mathcal{H})$ be the space of bounded operators a Hilbert space $\mathcal{H}$. $M_n (\mathcal{H}) = M_n \otimes \mathcal{H}$ is the space of $n\times n$ matrices with entries in $\mathcal{H}$.

An \textit{operator system} $\mathcal{S}$ is a subspace of a unital $C^*$-algebra which contains the unit and is self-adjoint; i.e., $\mathcal{S}=\mathcal{S}^*=\{a^*:a\in \mathcal{S}\}.$ If $\mathcal{B}$ is a $C^*$-algebra and $\phi : \mathcal{S}\xrightarrow[]{} \mathcal{B}$ is a linear map then $\phi$ is \textit{positive} if it maps positive elements to positive elements. Define $\phi_n : M_n(\mathcal{S})\xrightarrow[]{}M_n(\mathcal{B})$ by $\phi_n ((a_{i,j}))=(\phi(a_{i,j}))$. That is, $\phi_n = \phi \otimes \text{id}_n$ where $\text{id}_n$ is the identity map on $M_n$. Call $\phi$ \textit{completely positive} (CP) if $\phi_n$ is positive for all $n$.

The key idea with operator systems is that they are defined by their matrix order; i.e., the cones of positive elements in $M_n(\mathcal{S})$ for each $n$. As subspaces of $C^*$-algebras this is defined using bounded operators on a Hilbert space. To show two spaces define the same operator system it is necessary to show their matrix orders are the same and for this we use a complete order isomorphism. A linear map $\phi : \mathcal{S}\xrightarrow[]{}\mathcal{T}$ between operator systems is a \textit{complete order isomorphism} if it is bijective and both $\phi$ and $\phi^{-1}$ are completely positive. See \cite{vern} for more on operator systems.

Stinespring's dilation theorem says that if $\mathcal{A}$ is a unital $C^*$-algebra and $\phi : \mathcal{A}\xrightarrow[]{}B(\mathcal{H})$ is a completely positive map, then there exists a Hilbert space $\mathcal{K}$, a unital *-homomorphism $\pi : \mathcal{A}\xrightarrow[]{}B(\mathcal{K})$, and a bounded operator $V:\mathcal{H}\xrightarrow[]{}\mathcal{K}$ such that $\phi (a) = V^*\pi (a) V$. In the finite dimensional case, for a CP map $\phi :M_d \xrightarrow[]{} M_r$ there are a collection of operators $V_i : \mathbb{C}^r \xrightarrow[]{}\mathbb{C}^d$ called the \textit{Kraus operators} such that the action of $\phi$ is given by
\begin{equation*}
    \phi (X) = \sum_{i}^m V_i^* X V_i.
\end{equation*}

Let $E_{i,j}$, $1\leq i,j\leq d$ denote the matrix units in $M_d$. Now any linear map is determined by its action on basis elements so for a linear $L:M_d \xrightarrow[]{} M_r$ we get a vector space isomorphism from $\mathcal{L}(M_d,M_r)$ onto $M_d (M_r)$ via $L \mapsto (L(E_{i,j}))$. The matrix $C_L := (L(E_{i,j}))$ is called the \textit{Choi matrix} or \textit{Choi-Jamiołkowski matrix} of the map. Choi's theorem says that a linear map $\phi : M_d \xrightarrow[]{} M_r$ is completely positive if and only if $C_{\phi} \geq 0$ in $M_d (M_r)$.

We will consider finite-dimensional quantum channels which are defined as linear, completely positive trace-preserving (CPTP) maps $\phi : M_d \xrightarrow[]{}M_r$. Trace-preserving means $\Tr(\phi(X))=\Tr(X)$ for all $X \in M_d$.

\section{Defining Quantum Superchannels}

We are now ready to define the space of quantum channels. Note that the Choi matrix of a quantum channel is a block matrix where the diagonal blocks each have trace one and the off diagonal blocks have trace zero.
\begin{defn} Given positive integers $d,r \geq 1$ define $SCPTP (d,r) := \text{span}\{\phi | \phi: M_d\xrightarrow[]{}M_r \text{ is a CPTP map}\} \subset \mathcal{L}(M_d, M_r)$. Also define $S(d,r)\subset M_d (M_r)$ to be the set of block matrices $(P_{i,j})$ such that for all $1\leq i,j \leq d$,  $\Tr (P_{i,i}) = \Tr (P_{j,j})$ and for $i\neq j$ $\Tr (P_{i,j}) =0$.
\end{defn}
There is a natural way to define a matrix order on the space $SCPTP(d,r)$: for an $n\times n$ matrix $(\phi_{i,j})$ of maps, with each $\phi_{i,j}\in SCPTP (d,r)$, define $\Phi : M_d \xrightarrow[]{} M_n(M_r)$ by $\Phi (x) = (\phi_{i,j}(x))$. Then $(\phi_{i,j})\geq 0$ if and only if $\Phi$ is completely positive.

Note in the case $d=1$ we have $M_1 = \mathbb{C}$ and since any such linear map is defined by its value at $1$ we have  an isomorphism $ \mathcal{L}(M_1,M_r) \cong M_r$ via $\phi \mapsto \phi (1)$. The positive matrices span the whole space so in fact $SCPTP (1,r) \cong M_r$. With the order on $SCPTP (d,r)$ that we just defined this is a complete order isomorphism. Similarly $S(1,r) \subset M_1 (M_r) = M_r$ and since there is just one block $P_{1,1}$ with no restriction we get all the $r\times r$ matrices in $S(1,r)$. Thus $SCPTP (1,r)$ is order isomorphic to  $S(1,r)$.

\begin{thm}\label{2.2}
$S(d,r)$ is an operator system and is completely order isomorphic to $SCPTP (d,r)$ via the Choi map
\begin{equation*}
    \begin{split}
        R: SCPTP(d,r) &\xrightarrow[]{} S(d,r) \\
        \phi &\mapsto C_{\phi}.
    \end{split}
\end{equation*}
\end{thm}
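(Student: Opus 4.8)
The plan is to treat the two assertions in turn: first that $S(d,r)$ is a concrete operator system, then that the Choi map $R$ is a complete order isomorphism. For the first, I would exhibit $S(d,r)$ as a self-adjoint unital subspace of the $C^*$-algebra $M_d(M_r)=M_{dr}$. That it is a subspace is immediate, since the defining conditions $\Tr(P_{i,i})=\Tr(P_{j,j})$ and $\Tr(P_{i,j})=0$ for $i\neq j$ are linear. The identity $I_{dr}$, written as the block matrix with $P_{i,i}=I_r$ and $P_{i,j}=0$, satisfies these conditions, so $S(d,r)$ contains the order unit. Self-adjointness follows because the $(i,j)$ block of $(P_{i,j})^*$ is $P_{j,i}^*$, whose trace is $\overline{\Tr(P_{j,i})}$, so the conditions are preserved under taking adjoints. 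Hence $S(d,r)$ is an operator system with order unit $I_{dr}$ and matrix order inherited from $M_{dr}$.

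Next I would verify that $R$ is a linear bijection of $SCPTP(d,r)$ onto $S(d,r)$. Linearity and injectivity are free, as $R$ is the restriction of the vector-space isomorphism $L\mapsto C_L$ from the preliminaries. For a CPTP map $\phi$ the trace-preserving property gives $\Tr(\phi(E_{i,j}))=\Tr(E_{i,j})=\delta_{ij}$, so $C_\phi\in S(d,r)$; extending by linearity shows $R(SCPTP(d,r))\subseteq S(d,r)$. For surjectivity I would argue through positive elements. If $P\in S(d,r)$ is positive in $M_{dr}$, then $P=C_\phi$ for a CP map $\phi$ by Choi's theorem, and the trace conditions force $\Tr(\phi(E_{i,i}))$ to be a common constant $c\geq 0$ with off-diagonal traces zero; thus $\phi=c\,\psi$ for a CPTP map $\psi$ (or $\phi=0$ if $c=0$, since a positive matrix of trace $\sum_i\Tr(P_{i,i})=0$ vanishes), so $P\in R(SCPTP(d,r))$. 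Finally, every element of the operator system $S(d,r)$ is a linear combination of its positive elements — writing a self-adjoint $a$ as $\tfrac12[(\|a\|I+a)-(\|a\|I-a)]$ and a general element via its real and imaginary parts — so $S(d,r)\subseteq R(SCPTP(d,r))$, giving equality.

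For the complete order isomorphism I would show that $R_n=R\otimes\text{id}_n$ both preserves and reflects positivity. Unwinding the definitions, an element $(\phi_{i,j})\in M_n(SCPTP(d,r))$ is positive precisely when $\Phi(x)=(\phi_{i,j}(x))$ is CP, which by Choi's theorem is equivalent to positivity of $C_\Phi=(\Phi(E_{k,l}))_{k,l}\in M_d\otimes M_n\otimes M_r$. On the other side, $R_n((\phi_{i,j}))=(C_{\phi_{i,j}})_{i,j}\in M_n\otimes M_d\otimes M_r$, whose positivity as an element of $M_n(S(d,r))$ is by definition its positivity in $M_n(M_{dr})$. The two multi-indexed matrices $C_\Phi$ and $(C_{\phi_{i,j}})$ have the same entries $\phi_{i,j}(E_{k,l})$ arranged in two orders, so they are carried into one another by the canonical flip of the $M_d$ and $M_n$ tensor legs. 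Since this flip is conjugation by a permutation unitary, hence a $*$-isomorphism, it preserves positivity in both directions, yielding $(\phi_{i,j})\geq 0\iff R_n((\phi_{i,j}))\geq 0$.

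I expect the main obstacle to be the bookkeeping in this last step: one must set up the identifications $M_d(M_n(M_r))\cong M_n(M_d(M_r))$ precisely and confirm that the reshuffling of the block indices $(i,j)$ and $(k,l)$ is implemented by a single unitary conjugation rather than merely a linear isomorphism, since only the former guarantees that positivity is \emph{reflected} as well as preserved. The operator-system and bijectivity parts are comparatively routine once Choi's theorem and the trace conditions are in hand.
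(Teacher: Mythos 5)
Your proposal is correct and follows essentially the same route as the paper: verify the operator system axioms directly, establish surjectivity by decomposing an arbitrary element of $S(d,r)$ into positive elements of $S(d,r)$ and rescaling each positive to the Choi matrix of a CPTP map, and obtain the complete order isomorphism by identifying the two block arrangements of $\left(\phi_{i,j}(E_{k,l})\right)$ via the canonical shuffle $M_d(M_n(M_r))\cong M_n(M_d(M_r))$, which is a $*$-isomorphism and hence preserves and reflects positivity. The only difference is cosmetic: you prove explicitly (via $a=\tfrac12\left[(\|a\|I+a)-(\|a\|I-a)\right]$ and real/imaginary parts) the decomposition into positives that the paper simply cites as a standard fact about operator systems.
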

\begin{proof}
It is clear that $S(d.r)$ contains the identity matrix and the linearity of the trace ensures it is a subspace. If $X=(P_{i,j})$ is a block matrix then the adjoint is $X^*=(P_{j,i}^*)$ and for any block $\Tr (P) = \overline{\Tr (P^*)}$. This implies that $S(d, r)$ is self-adjoint and hence an operator system.

Next show $\phi \mapsto C_{\phi}$ is an isomorphism between $SCPTP (d,r)$ and $S(d,r)$. This is the correct range because the Choi matrix for a quantum channel is in $S(d,r)$. It is injective because any linear map is defined by its Choi matrix. To prove surjectivity we use the fact about operator systems that any $X \in S(d,r)$ can be written in terms of four positive matrices $P_i \in S(d,r)$, $1\leq i \leq 4$, as
\begin{equation*}
    X= (P_1 - P_2 ) + i(P_3 - P_4 ).
\end{equation*}
As they are positive $\Tr (P_i) = 0$ only if $P_i = 0$. Thus we can scale each $P_i$ by a factor $1/\Tr (P_i)$ to make it into a Choi matrix associated with a CPTP map. This proves any $X\in S(d,r)$ is in the span of Choi matrices of quantum channels.

Finally we show it is a complete order isomorphism. For a matrix of maps in $SCPTP (d,r)$ the condition to be positive is
\begin{equation*}
    \left(\phi_{i,j}\right)_{i,j}\geq0 \iff \Phi \text{ CP} \iff \left(\Phi (E_{k,l})\right)_{k,l}\geq 0  \iff \left(\left(\phi_{i,j} \left(E_{k,l}\right)\right)_{i,j}\right)_{k,l}\geq 0.
\end{equation*}
The corresponding matrix of Choi matrices can be written
 $$
 \left(C_{\phi_{i,j}}\right)_{i,j} = \left(\left(\phi_{i,j}\left(E_{k,l}\right)\right)_{k,l}\right)_{i,j}
 $$
To conclude note that the shuffle which maps $M_m (M_n (\mathcal{A}))$ to $M_n (M_m (\mathcal{A}))$ is a $*$-isomorphism and hence preserves positivity.

\end{proof}

\begin{remark}
A tensor product of linear maps gives a map on the tensor product of the spaces so there is an inclusion

$$SCPTP (d_1,r_1) \otimes SCPTP (d_2,d_2) \subseteq SCPTP (d_1d_2,r_1r_2 ).$$
We can show this is generally a strict inclusion and the spaces are not equal. The description of $S(d,r)$ allows us to do a dimension count giving the dimension of $SCPTP(d,r)$ as $d^2r^2 -d^2 +1$. So for the tensor product space we have dimension $(d_1^2r_1^2 -d_1^2 +1)(d_2^2r_2^2 - d_2^2 +1)$ but for the space on the right we have dimension $d_1^2d_2^2r_1^2r_2^2 -d_1^2d_2^2 +1$ which is generally larger. The difference is
$$d_1^2d_2^2(r_1^2+r_2^2 -2) -d_1^2 (r_1^2-1)-d_2^2(r_2^2-1)$$
and this is non-negative. We endow this tensor space with an order by regarding its elements as elements of the operator system $SCPTP (d_1d_2,r_1r_2)$.
\end{remark}

To motivate the definition of quantum superchannels it is worth recalling some reasons behind the definition for ordinary quantum channels. Two simple requirements were that channels be linear maps that take quantum states to quantum states. This gives the trace-preserving condition. The requirements that quantum systems combine using tensor products, and that the identity map is a valid channel is what implies the completely positive condition. For superchannels we similarly require they be linear maps which takes channels to channels, and that the tensor of any two superchannels is again a superchannel on the combined space.

\begin{defn}
Given two spaces of quantum channels $SCPTP (d_i,r_i)$, $i=1,2$, a \textit{QSC} is a linear map $\Gamma : SCPTP(d_1,r_1) \xrightarrow[]{} SCPTP(d_2,r_2)$
which satisfies
\begin{enumerate}
    \item if $\phi$ is CPTP then $\Gamma (\phi )$ is CPTP
    \item given any other dimensions $d_3, r_3 \in \mathbb{N}$ and the identity map $\text{id}_{d_3, r_3} :SCPTP (d_3,r_3)\xrightarrow[]{} SCPTP (d_3,r_3)$ then $\Gamma \otimes \text{id}_{d_3, r_3} : SCPTP (d_1,r_1) \otimes SCPTP (d_3,r_3) \xrightarrow[]{} SCPTP (d_2,r_2 ) \otimes SCPTP (d_3,r_3)$ sends CP maps to CP maps.
\end{enumerate}
\end{defn}

Let $R_i: SCPTP (d_i,r_i) \xrightarrow[]{} S(d_i,r_i)$ be the complete order isomorphism sending $\phi$ to $C_{\phi}$. If $\Gamma$ is a QSC it induces a map $\widetilde{\Gamma}: S(d_1,r_1)\xrightarrow[]{} S(d_2,r_2)$ via
$$\widetilde{\Gamma} = R_2 \circ \Gamma \circ R_1^{-1}.$$
Explicitly this acts as $\widetilde{\Gamma} (C_{\phi}) = C_{\Gamma (\phi)}$. It is useful to study this map because the properties of QSC's implies that $\widetilde{\Gamma}$ is completely positive. Note that by Choi's theorem $\widetilde{\Gamma}$ sends positive matrices to positive matrices.

\begin{thm}
If $\Gamma :SCPTP (d_1,r_1)\xrightarrow[]{} SCPTP (d_2,r_2)$ preserves CPTP maps then it is a QSC if and only if $\widetilde{\Gamma}$ is completely positive.
\end{thm}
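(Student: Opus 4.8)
The plan is to reduce the statement to the matrix-order characterisation of complete positivity and then transport everything to Choi matrices, where the tensor product of maps becomes, up to a shuffle, the tensor product of Choi matrices. Since $R_1$ and $R_2$ are complete order isomorphisms, $\widetilde{\Gamma}=R_2\circ\Gamma\circ R_1^{-1}$ is completely positive if and only if $\Gamma$ is completely positive as a map of operator systems, i.e.\ if and only if each amplification $\Gamma\otimes\text{id}_n$ is positive on $M_n\big(SCPTP(d_1,r_1)\big)$. Because $\Gamma$ is assumed to preserve CPTP maps, condition (1) of the definition of a QSC already holds, so being a QSC is equivalent to condition (2). Thus the theorem reduces to the single equivalence: condition (2) holds if and only if $\widetilde{\Gamma}$ is completely positive.

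The key computational tool is the shuffle identity already used in Theorem~\ref{2.2}: for maps $\phi$ and $\chi$ the Choi matrix $C_{\phi\otimes\chi}$ is the canonical shuffle of $C_\phi\otimes C_\chi$, and this shuffle is a $*$-isomorphism, hence preserves positivity. Writing a general element of the tensor product as $T=\sum_k\phi_k\otimes\chi_k$ with $\phi_k\in SCPTP(d_1,r_1)$ and $\chi_k\in SCPTP(d_3,r_3)$, and setting $N=d_3r_3$ so that each $C_{\chi_k}\in S(d_3,r_3)\subseteq M_N$, one sees that $T$ is CP iff $\sum_k C_{\phi_k}\otimes C_{\chi_k}\ge 0$, while $(\Gamma\otimes\text{id}_{d_3,r_3})(T)$ is CP iff
\begin{equation*}
\sum_k \widetilde{\Gamma}(C_{\phi_k})\otimes C_{\chi_k}=\big(\widetilde{\Gamma}\otimes\text{id}_{M_N}\big)\Big(\sum_k C_{\phi_k}\otimes C_{\chi_k}\Big)\ge 0,
\end{equation*}
using $\widetilde{\Gamma}(C_{\phi})=C_{\Gamma(\phi)}$. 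This exhibits the action of $\Gamma\otimes\text{id}_{d_3,r_3}$ on Choi matrices as the matrix amplification $\widetilde{\Gamma}_N$ applied to a positive element.

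For the reverse implication I would argue as follows. If $\widetilde{\Gamma}$ is completely positive then $\widetilde{\Gamma}\otimes\text{id}_{M_N}$ is positive for every $N$, in particular for $N=d_3r_3$; feeding a CP map $T$ through the display above gives a positive input and hence a positive output, and shuffling back shows $(\Gamma\otimes\text{id}_{d_3,r_3})(T)$ is CP. This verifies condition (2) for all $d_3,r_3$. For the forward implication I would specialise condition (2) to $d_3=1$, $r_3=n$. Here $S(1,n)=M_n$ carries no trace constraints, so $SCPTP(1,n)\cong M_n$ and the Choi matrices $C_{\chi_k}$ range over all of $M_n$; consequently $\sum_k C_{\phi_k}\otimes C_{\chi_k}$ ranges over every positive element of $S(d_1,r_1)\otimes M_n\cong M_n\big(S(d_1,r_1)\big)$. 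Condition (2) then forces $\widetilde{\Gamma}\otimes\text{id}_{M_n}=\widetilde{\Gamma}_n$ to be positive for all $n$, which is precisely complete positivity of $\widetilde{\Gamma}$.

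The main obstacle is bookkeeping the two order structures correctly: the tensor product $SCPTP(d_1,r_1)\otimes SCPTP(d_3,r_3)$ is ordered through its embedding into the larger system $SCPTP(d_1d_3,r_1r_3)$, whereas complete positivity is phrased through the matrix order on $M_n\big(S(d_1,r_1)\big)$, and I must check via the shuffle that these genuinely agree. The delicate asymmetry is that for general $d_3,r_3$ the Choi matrices of $SCPTP(d_3,r_3)$ fill out only the subcone $S(d_3,r_3)\subseteq M_N$, which is adequate for the (easy) reverse direction but too small to recover full complete positivity; it is exactly the degenerate case $d_3=1$, where $S(1,n)$ is the whole matrix algebra, that supplies the full positive cone needed for the forward direction. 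Confirming this surjectivity onto the positive cone of $M_n\big(S(d_1,r_1)\big)$ is the crux of the argument.
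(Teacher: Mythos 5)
Your proposal is correct and follows essentially the same route as the paper's proof: the forward direction specialises condition (2) to $d_3=1$, $r_3=n$, exploiting $S(1,n)=M_n$ to make every positive element of $M_n\big(S(d_1,r_1)\big)$ the (shuffled) Choi matrix of a CP element of the tensor product and hence deduce positivity of each amplification $\widetilde{\Gamma}_n$, while the reverse direction pushes a CP input through $\widetilde{\Gamma}\otimes\text{id}$ on Choi matrices, exactly as the paper does with $\widetilde{\Gamma}\otimes\widetilde{\text{id}_{d_3,r_3}}$. Your explicit bookkeeping of the shuffle isomorphism and of the surjectivity onto the positive cone simply spells out steps the paper leaves implicit.
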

\begin{proof}
Recall that $S(1,n) = M_n$. Take the identity map on $SCPTP (1,n)$ which has as its induced map on $S(1,n)$ the identity on $M_n$. Let $C_{\phi} \in S(d_1, r_1) \otimes M_n$ be a positive matrix. If $R_i: SCPTP (d_i,r_i) \xrightarrow[]{} S(d_i,r_i)$ are the Choi isomorphisms we can write

$$\widetilde{\Gamma} \otimes \text{id}_n (C_{\phi}) =( R_2 \otimes R_3) (\Gamma \otimes \text{id}_{1,n})(\phi).$$ Then the second property of QSC's implies that $\widetilde{\Gamma} \otimes \text{id}_n$ sends positive matrices to positive matrices for all $n$. Thus $\widetilde{\Gamma}$ is a completely positive map.

For the converse, suppose $\widetilde{\Gamma}$ is completely positive and note that $\text{id}_{d_3,r_3}: SCPTP (d_3,r_3)\xrightarrow[]{} SCPTP (d_3,r_3)$ is a QSC, and thus has a completely positive induced map. For any CP map $\phi$ we have $$\Gamma \otimes \text{id}_{d_3,r_3} (\phi) = (R_2^{-1} \otimes R_{3}^{-1}) (\widetilde{\Gamma} \otimes \widetilde{\text{id}_{d_3,r_3}}) (C_{\phi})$$
and this is a CP map since $\widetilde{\Gamma}$ and $\widetilde{\text{id}_{d_3,r_3}}$ are both completely positive so their tensor is a positive map.
\end{proof}
Thus the second property in the definition of QSC can be replaced by the requirement that QSC's be completely positive. 

Elements of $SCPTP (d,r)$ scale the trace of density matrices and this scaling factor is preserved by QSCs. Consider a QSC $\Gamma :SCPTP (d_1,r_1)\xrightarrow[]{}SCPTP(d_2,r_2)$. Suppose $\phi \in SCPTP(d_1,r_1)$ satisfies $\Tr \phi(X) = c\Tr X$ some constant $c$ and that $\Tr \Gamma (\phi)(Y) = k\Tr Y$ some constant $k$. Decompose $\phi$ as a span of quantum channels
$$\phi = \sum_i c_i \phi_i$$
and use the trace condition on $\phi$ to see $$\sum_i c_i = c.$$
Now since $\Gamma$ is linear and sends TP maps to TP maps we get
$$k\Tr Y= \Tr \Gamma (\phi )(Y) = \sum_i c_i \Tr \Gamma (\phi_i )(Y) = c \Tr Y$$
and so $k=c$.
\begin{remark}
A tensor product of two QSCs $\Gamma_1 \otimes \Gamma_2 : SCPTP (d_1,r_1) \otimes SCPTP (d_3,r_3) \xrightarrow[]{} SCPTP (d_2,r_2) \otimes SCPTP (d_4,r_4)$ will send CPTP maps in the domain to CPTP maps in the range, but it is not a QSC as its domain is not $SCPTP (d_1d_3, r_1r_3).$
\end{remark}

\begin{remark}[QSC vs quantum superchannel]
In \cite{original} and \cite{Gour} the definition of superchannel used the space of all linear maps as its domain and range. In particular a \textit{quantum superchannel} is a linear map $S: \mathcal{L}(M_{d_1},M_{r_1}) \xrightarrow[]{} \mathcal{L}(M_{d_2},M_{r_2})$ which satisfies 
\begin{itemize}
    \item CP preserving: $S$ sends CP maps to CP maps
    \item Completely CP preserving: For any $d,r$ if $\text{id}_{d,r}$ is the identity map acting on $\mathcal{L}(M_d,M_r)$ then $S \otimes \text{id}_{d,r}$ is CP preserving
    \item TP preserving: $S$ sends TP maps to TP maps 
\end{itemize}
\end{remark}

Superchannels and QSC's are defined in similar ways, although on a different space of maps. Since superchannels use the whole vector space of linear maps, tools such as the Choi matrix can be applied to them. This is not the case for QSC's since the space spanned by Choi matrices of quantum channels doesn't contain the standard matrix units, so the Choi matrix cannot be defined. The next theorem allows us to extend QSC's and treat them as restrictions of superchannels.
\begin{thm}\label{2.7}
Every QSC extends to a quantum superchannel.
\end{thm}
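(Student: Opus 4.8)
The plan is to reduce the statement to Arveson's extension theorem applied at the level of Choi matrices, and then to transport the resulting completely positive map back into a map on the full space of linear maps. First I would pass from $\Gamma$ to its induced map $\widetilde{\Gamma}: S(d_1,r_1) \to S(d_2,r_2)$, which is completely positive by the characterisation theorem proved above. Viewing $S(d_2,r_2)$ as a subspace of the $C^*$-algebra $M_{d_2}(M_{r_2}) = M_{d_2 r_2} = B(\mathbb{C}^{d_2 r_2})$, the map $\widetilde{\Gamma}$ becomes a completely positive map from the operator system $S(d_1,r_1) \subseteq M_{d_1}(M_{r_1})$ into $B(\mathbb{C}^{d_2 r_2})$. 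Arveson's extension theorem then produces a completely positive map $\widehat{\Gamma}: M_{d_1}(M_{r_1}) \to M_{d_2}(M_{r_2})$ that restricts to $\widetilde{\Gamma}$ on $S(d_1,r_1)$.

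Next I would carry $\widehat{\Gamma}$ back through the Choi isomorphisms. Since $\mathcal{L}(M_{d_i},M_{r_i}) \cong M_{d_i}(M_{r_i})$ via $L \mapsto C_L$, the map $\widehat{\Gamma}$ determines a linear map $S: \mathcal{L}(M_{d_1},M_{r_1}) \to \mathcal{L}(M_{d_2},M_{r_2})$ by the rule $C_{S(\psi)} = \widehat{\Gamma}(C_\psi)$; note that the codomain is the full space $\mathcal{L}(M_{d_2},M_{r_2})$, exactly as required for a superchannel. Because $\widehat{\Gamma}$ agrees with $\widetilde{\Gamma}$ on $S(d_1,r_1)$ and $C_\phi \in S(d_1,r_1)$ for every $\phi \in SCPTP(d_1,r_1)$, we obtain $C_{S(\phi)} = \widehat{\Gamma}(C_\phi) = \widetilde{\Gamma}(C_\phi) = C_{\Gamma(\phi)}$, so $S$ coincides with $\Gamma$ on the original domain and is a genuine extension.

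It then remains to verify the three defining properties of a quantum superchannel. CP-preservation is immediate from Choi's theorem: if $\psi$ is CP then $C_\psi \geq 0$, hence $C_{S(\psi)} = \widehat{\Gamma}(C_\psi) \geq 0$ and $S(\psi)$ is CP. Complete CP-preservation follows from the complete positivity of $\widehat{\Gamma}$ by the same tensoring-of-induced-maps argument used in the converse direction of the characterisation theorem above, now run on the full matrix algebras $M_{d_i r_i}$ rather than on the subsystems $S(d_i,r_i)$. For TP-preservation I would observe that the Choi matrix of any trace-preserving $\psi: M_{d_1}\to M_{r_1}$ already lies in $S(d_1,r_1)$, since its diagonal blocks have trace one and its off-diagonal blocks have trace zero; hence $\psi \in SCPTP(d_1,r_1)$, so $S(\psi) = \Gamma(\psi)$, and the trace-scaling computation preceding this theorem shows $\Gamma$ sends TP maps to TP maps, so $S$ does as well.

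The main obstacle I anticipate is less any individual deduction than the careful bookkeeping required to move between the three incarnations of the problem—maps on $SCPTP(d_i,r_i)$, their induced maps on the operator systems $S(d_i,r_i)$, and completely positive maps on the ambient algebras $M_{d_i r_i}$—while keeping every identification the Choi correspondence, so that complete positivity of $\widehat{\Gamma}$ genuinely translates into complete CP-preservation of $S$. A further subtlety worth flagging is that Arveson's theorem controls only complete positivity and says nothing about the trace, so TP-preservation must be recovered by hand; it survives precisely because \emph{every} trace-preserving map already sits inside the original domain, so the extension introduces no new behaviour on that affine slice.
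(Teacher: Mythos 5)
Your proposal is correct and follows essentially the same route as the paper: apply Arveson's extension theorem to the induced completely positive map $\widetilde{\Gamma}$ on $S(d_1,r_1)$, transport the extension back through the Choi isomorphism to a map on $\mathcal{L}(M_{d_1},M_{r_1})$, and verify the superchannel axioms, recovering TP-preservation from the fact that every trace-preserving map already lies in $SCPTP(d_1,r_1)$. Your handling of the TP step via the trace-scaling computation is, if anything, slightly more explicit than the paper's appeal to linearity, but the argument is the same.
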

\begin{proof}
Let $\Gamma$ be a QSC. \textit{Arveson's extension theorem} says that given a $C^*$-algebra $\mathcal{A}$ containing an operator system $S$ then if $\phi :S \xrightarrow[]{}B(\mathcal{H})$ is a completely positive map there is a completely positive map $\psi : \mathcal{A}\xrightarrow[]{}B(\mathcal{H})$ extending $\phi$. Since $\widetilde{\Gamma}$ is CP it has a CP extension with domain all of $M_{d_1}(M_{r_1})$. Call this extension $\widetilde{S}$.

Define the matrix order on $\mathcal{L}(M_d,M_r)$ in the same way as for $SCPTP(d,r)$ and a similar proof to Theorem \ref{2.2} shows the Choi isomorphism $\phi \mapsto C_{\phi}$ is also a complete order isomorphism between $\mathcal{L}(M_d,M_r)$ and $M_d(M_r)$. Thus $\widetilde{S}$ corresponds to a map $S$ which is an extension of $\Gamma$. We will show $S$ is a quantum superchannel.

Any TP map $f\in \mathcal{L}(M_{d_1},M_{r_1})$ will have a Choi matrix that has trace one on the diagonal blocks and trace zero on the off diagonal blocks. Thus $C_f \in S(d_1,r_1)$ and so we can write $f$ as a linear combination of CPTP maps. Using the linearity of $S$ we can see that $S(f)$ is a TP map.

To see completely CP preserving take a matrix of CP maps $(\phi_{i,j})$. Then $(C_{\phi_{i,j}})$ is a matrix of positive matrices and since $\widetilde{S}$ is completely positive $\widetilde{S}^{(n)}$ maps it to another matrix of positive matrices.
\end{proof}

\begin{remark}
The extension of a QSC is not unique. For example, let $d_1=2$, let $r_1$ be arbitrary size, and let $d_2=r_2=1$. Define $\widetilde{\Gamma}_1, \widetilde{\Gamma}_2: M_2 (M_{r_1})\xrightarrow[]{} M_1 (M_1)$ via

\begin{equation}\label{2lads}
    \begin{split}
        \widetilde{\Gamma}_1 \left( \begin{pmatrix}
\phi (E_{11}) & \phi (E_{12}) \\
\phi (E_{21}) & \phi (E_{22}) 
\end{pmatrix} \right) &= \Tr (\phi (E_{11})), \\
        \widetilde{\Gamma}_2 \left( \begin{pmatrix}
\phi (E_{11}) & \phi (E_{12}) \\
\phi (E_{21}) & \phi (E_{22}) 
\end{pmatrix}  \right) &= \Tr (\phi (E_{22})).
    \end{split}
\end{equation}
These are different maps in general but are identical when restricted to the space of quantum channels $S(2,r_1)$. They are easily seen to be linear maps which send CPTP maps to $1$. To see that they are completely positive take $V_1= \begin{pmatrix}I_{r_1} \\ 0 \end{pmatrix}$, $V_2 = \begin{pmatrix}0\\ I_{r_1} \end{pmatrix}$ then
$$\widetilde{\Gamma}_i(C_{\phi}) = \Tr (V_i^* C_{\phi}V_i).$$
\end{remark}

\begin{remark}
 Define the \textit{depolarizing channel} $\Delta_1:M_{d_1}\to M_{r_1}$
 $$\Delta_1 (\rho)=\frac{Tr(\rho)}{r_1}1$$
 and similarly $\Delta_2:M_{d_2}\to M_{r_2}$. Then the Choi matrices are $C_{\Delta_1}=\frac{1}{r_1}1=\frac{1}{r_1}C_{r_1\Delta_1}$. For $\widetilde{\Gamma}$ to be unital we require
 
 $$\widetilde{\Gamma}(1)=\widetilde{\Gamma}(C_{r_1\Delta_1}) = 1=C_{r_2\Delta_2}$$
 but since $\widetilde{\Gamma}(C_{r_1\Delta_1})=C_{\Gamma(r_1\Delta_1)}$ unital is equivalent to requiring $\Gamma (r_1 \Delta_1)=r_2\Delta_2$. Thus the depolarizing channel is the \textit{order unit} of the operator system $SCPTP(d,r)$, see \cite{vern}.
\end{remark}

\subsection{QSC with no TP extension}
A QSC is defined by a CP map $\widetilde{\Gamma}: S(d_1,r_1) \xrightarrow[]{} S(d_2,r_2)$ which sends block matrices of trace $\lambda d_1$ to block matrices of trace $\lambda d_2$ where $\lambda$ is the trace scaling factor of the linear map associated with the block matrix (with $\lambda =1$ for CPTP maps and their Choi matrix). Thus the map 
$$\frac{d_1}{d_2}\widetilde{\Gamma}$$
is a CPTP map. If we extend $\widetilde{\Gamma}$ to a superchannel $\widetilde{S}: M_{d_1}(M_{r_1}) \xrightarrow[]{} M_{d_2}(M_{r_2})$ then in general it is not the case that $\widetilde{S}$ is TP.

Consider a map $M_2(M_2) \xrightarrow[]{} M_2 (M_2)$
defined by
\begin{equation*}
\begin{split}
E_{11} \mapsto \text{Diag}(a_1,a_2,a_3,a_4) &=A \\
E_{22}\mapsto \text{Diag}(b_1,b_2,b_3,b_4) &= B \\
E_{33}\mapsto \text{Diag} (c_1,c_2,c_3,c_4) &=C \\
E_{44}\mapsto \text{Diag}(d_1,d_2,d_3,d_4) &= D 
\end{split}
\end{equation*}
and all other standard basis matrices get sent to $0$.

Since $E_{11} +E_{33}$, $E_{11}+E_{44}$, $E_{22}+E_{33}$, and $E_{22}+E_{44}$ are in $S(2,2)$ for this map to restrict to give a QSC we require $A+C$, $A+D$, $B+C$, and $B+D$ to be in $S(2,2)$ and have the same trace (since $\frac{d_1}{d_2}=1$) i.e. they must have trace $2$ and both diagonal blocks each have trace $1$. In other words,

\begin{equation*}
    \begin{split}
        a_1 +c_1 +a_2 +c_2 &= 1 \\
        a_3 +c_3+ a_4 + c_4 &=1 \\
        a_1 + d_1 +a_2 +d_2 &=1 \\
        a_3 +d_3 +a_4 +d_4 &=1
    \end{split}
\end{equation*}
and similarly with $b_i$ replacing $a_i$. This implies $a_1 +a_2 = b_1 +b_2$ and $a_3 +a_4 = b_3 +b_4$.

For this to be a trace-preserving map we require $\sum_{i} a_i = \sum_i b_i = \sum_i c_i = \sum_i d_i =1$.

So for a particular choice of $A, B, C, D$ which give a QSC with no TP extension consider

$$A= \begin{bmatrix} 
\frac{1}{2} & 0 & 0 & 0 \\
0 & \frac{1}{2} & 0 & 0 \\
0 & 0 & 1 & 0 \\
0 & 0 & 0 & 0
\end{bmatrix}$$

$$ B= \begin{bmatrix} 
1 & 0 & 0 & 0 \\
0 & 0 & 0 & 0 \\
0 & 0 & 0 & 0 \\
0 & 0 & 0 & 1
\end{bmatrix}$$

$$ C= \begin{bmatrix} 
0 & 0 & 0 & 0 \\
0 & 0 & 0 & 0 \\
0 & 0 & 0 & 0 \\
0 & 0 & 0 & 0
\end{bmatrix}.$$
This QSC is defined by the matrices $A+C$ and $B+C$. Any choice of $a_i$ and $c_i$ must satisfy $a_1 +c_1 = \frac{1}{2}$, $a_2+c_2 = \frac{1}{2}$,  $a_3+c_3 =1$, and $a_4 +c_4 = 0$ to be the same QSC.

However for an extension to be a positive map we require all $a_i, b_i, c_i,d_i$, $1 \leq i \leq 4$ to be non-negative. Thus $b_3+c_3=0 \implies b_3 = c_3 = 0$ but since $a_3 +c_3 =1$ we conclude $a_3=1$ in any extension. Similarly since $b_2 +c_2 = 0 \implies c_2 = 0$ but then $a_2+c_2 = \frac{1}{2} \implies a_2 =\frac{1}{2}$ in any extension. Already we have $a_2 +a_3 >1$ so it cannot be TP.

\subsection{Tensoring QSC's}
Take two QSC's,  $$\Gamma_1: SCPTP (d_1,r_1) \xrightarrow[]{} SCPTP (d_2, r_2)$$ and $$\Gamma_2 : SCPTP (d_3, r_3 ) \xrightarrow[]{} SCPTP (d_4, r_4).$$ Extend each to a superchannel $S_1$, $S_2$ respectively. Then $S_1 \otimes S_2$ is a superchannel on the combined spaces and it restricts to give a QSC on $SCPTP (d_1d_2,r_1r_2)$. This is not necessarily unique.

To see an example of this notice that for $a \neq b,$ 
$$\begin{bmatrix} 
a & 0  \\
0 & b 
\end{bmatrix} \notin S(2,1),$$ but that

$$\begin{bmatrix} 
1 & 0 & 0 & 0 \\
0 & 1 & 0 & 0 \\
0 & 0 & 1 & 0 \\
0 & 0 & 0 & 1
\end{bmatrix} \otimes \begin{bmatrix} 
a & 0  \\
0 & b 
\end{bmatrix} \in S(4,2).$$
The following two maps are extensions of the same QSC which give different outputs on $\begin{bmatrix} 
a & 0  \\
0 & b 
\end{bmatrix}$ 

$$\widetilde{S}_a \left(\begin{bmatrix} 
a & *  \\
* & b 
\end{bmatrix} \right) = a$$

$$\widetilde{S}_b \left(\begin{bmatrix} 
a & *  \\
* & b 
\end{bmatrix}\right) =b.$$
Then for any other superchannel $S$, the maps $\widetilde{S}\otimes \widetilde{S}_a$ and $\widetilde{S}\otimes \widetilde{S}_b$ are superchannels which restrict to give different QSCs on the space $S(4,2)$. But they are constructed by tensoring the same QSC's.

\section{Characterisation of superchannels}
In \cite{original} a characterisation of quantum superchannels is obtained describing them as a the action of a pre-processing channel followed by a post processing channel. It is shown that any quantum superchannel $\widetilde{\Gamma}: M_{d_1}(M_{r_1}) \xrightarrow[]{} M_{d_2}(M_{r_2})$ induces a unital CP map (i.e. the dual of a quantum channel) $\mathcal{N}:M_{d_1}\xrightarrow[]{}M_{d_2}$ which satisfies

\begin{equation}\label{1}
\Tr_{r_2}\widetilde{\Gamma}(C_{\phi}) = \mathcal{N}(\Tr_{r_1}C_{\phi})\end{equation}
where $\Tr_n$ is the partial trace tracing out system $M_n$. This $\mathcal{N}$ is where the pre-processing channel comes from.

Inserting the Kraus operators for $\widetilde{\Gamma}$ and $\mathcal{N}$ into this equation gives two different Kraus representations of the same channel. In \cite{Gour} this equation is shown to be equivalent to the Choi matrix $C_{\widetilde{\Gamma}}$ having two different purifications. Kraus representations are unique up to isometry and so are purifications (under a minimality condition). This provides a post processing channel.

This can be used to show that if $\Gamma : \mathcal{L}(M_{d_1},M_{r_1}) \xrightarrow[]{}\mathcal{L}(M_{d_2},M_{r_2})$ is a quantum superchannel there exists two quantum channels $\psi_{\text{pre}}:M_{d_2}\xrightarrow[]{}M_{d_1}\otimes M_{e}$, $\psi_{\text{post}}:M_{r_1}\otimes M_e \xrightarrow[]{}M_{r_2}$ where $e\leq d_1 d_2$ such that
$$\Gamma (\phi ) = \psi_{\text{post}}\circ (\phi \otimes \text{id}_e)\circ \psi_{\text{pre}}.$$
The dimension $e$ can be chosen to be the rank of $\Tr_{r_1}\Tr_{r_2} C_{\widetilde{\Gamma}}$ and the channel $\psi_{\text{pre}}$ can be chosen to be isometric. In \cite{2021arXiv210101552G} it is shown that this is unique in the sense that any other characterisation with equal or smaller dimension is equivalent up to action by a unitary channel.

The extension in Theorem \ref{2.7} shows that this analysis will still apply to QSC's defined on $SCPTP(d,r)$, however in this case it will be non-unique. Without extending, the characterisation cannot be derived as the Choi matrix and Kraus operators are not defined for CP maps on an operator system.

The dimension $e$ will depend on the extension of the QSC. Consider the non unique extensions from Equation (\ref{2lads}) and set $r_1=2$, the superchannel Choi matrices are
\begin{equation*}
    \begin{split}
       C_{\widetilde{\Gamma}_1}&= \begin{pmatrix}
1 & 0 & 0 & 0 \\
0 & 1 & 0 & 0 \\
0 & 0 & 0 & 0 \\
0 & 0 & 0 & 0 
\end{pmatrix}  =E_{11}+ E_{22} , \\
       C_{\widetilde{\Gamma}_2} &=  \begin{pmatrix}
0 & 0 & 0 & 0 \\
0 & 0 & 0 & 0 \\
0 & 0 & 1 & 0 \\
0 & 0 & 0 & 1 
\end{pmatrix} = E_{33}+ E_{44} .
    \end{split}
\end{equation*}
In $M_{d_1}(M_{r_1}) = M_{2}(M_2)$, we have $ \Tr_{r_1} E_{11} = E_{11} \in M_2$, $\Tr_{r_1} E_{22} = E_{11} \in M_2$, etc. Thus 
\begin{equation*}
    \begin{split}
        \Tr_{r_1}\Tr_{r_2} C_{\widetilde{\Gamma}_1} &= \begin{pmatrix}2 & 0 \\ 0 & 0 \end{pmatrix}, \\
        \Tr_{r_1}\Tr_{r_2} C_{\widetilde{\Gamma}_2} & = \begin{pmatrix}0 & 0 \\ 0 & 2 \end{pmatrix},
    \end{split}
\end{equation*}
which have equal rank. However, another equivalent extension is given by any convex combination $\widetilde{\Gamma} = p_1 \widetilde{\Gamma}_1 + p_2 \widetilde{\Gamma}_2$ for $p_1,p_2 >0$, $p_1 +p_2 =1$. This has Choi matrix
\begin{equation*}
     C_{\widetilde{\Gamma}} =  \begin{pmatrix}
p_1 & 0 & 0 & 0 \\
0 & p_1 & 0 & 0 \\
0 & 0 & p_2 & 0 \\
0 & 0 & 0 & p_2 
\end{pmatrix}, 
\end{equation*}
which reduces to \begin{equation*}
    \Tr_{r_1}\Tr_{r_2} C_{\widetilde{\Gamma}} = \begin{pmatrix} 2p_1 & 0 \\ 0 & 2p_2 \end{pmatrix},
\end{equation*}
and this has greater rank.

\begin{remark}
 The set of possible superchannel extensions of a QSC is a convex set. In the example given the extensions with minimal $e$ are extreme points. A natural question to ask is whether it is generally true that the extensions which give minimal dimensions $e$ are extreme points.
\end{remark}

Define $CP[M_n,M_m;K]$ to be CP maps from $M_n$ to $M_m$ which send the identity to a fixed $K\geq 0$. This is a convex set. The following theorem from \cite{CHOI1975285} characterises the extreme points in terms of the Kraus operators:

\begin{thm}\label{choi}

A map $\phi \in CP[M_n,M_m;K]$ is extreme if and only if it admits an expression $\phi (A) = \sum_{i}V_{i}^* AV_i$ for all $A \in M_n$ such that $\sum_i V_i^* V_i = K$ and $\{V_i^*V_j\}_{ij}$ is a linearly independent set. 
\end{thm}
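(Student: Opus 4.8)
The plan is to reduce both implications to the Radon--Nikodym structure theorem for completely positive maps, together with a perturbative reformulation of extremality. Throughout I would fix a \emph{minimal} Kraus representation $\phi(A)=\sum_{i=1}^{k}V_i^*AV_i$, meaning one whose Kraus operators $\{V_i\}$ are linearly independent; such a representation always exists by taking the rank decomposition of the Choi matrix $C_\phi\geq 0$ and reshaping each rank-one summand into a $V_i$, and then $k=\operatorname{rank}C_\phi$. A first routine observation reconciles the two phrasings: if $\{V_i^*V_j\}_{i,j}$ is linearly independent then so is $\{V_i\}$, since a relation $\sum_i a_iV_i=0$ would give $\sum_i a_iV_j^*V_i=0$ for each fixed $j$, a nontrivial dependence inside $\{V_i^*V_j\}$. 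Hence any representation with $\{V_i^*V_j\}$ independent is automatically minimal, and it suffices to characterise extremality using one minimal representation.

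The key tool I would invoke is Choi's Radon--Nikodym theorem: for the minimal representation above, a completely positive $\psi$ satisfies $\phi-\psi$ completely positive if and only if $\psi(A)=\sum_{i,j}h_{ij}V_i^*AV_j$ for a positive semidefinite $H=(h_{ij})$ with $H\leq I_k$. More generally the linear map $H\mapsto\Psi_H$, where $\Psi_H(A):=\sum_{i,j}h_{ij}V_i^*AV_j$, is injective on Hermitian matrices precisely because $\{V_i\}$ is linearly independent, and it sends $\phi$ itself to $H=I_k$. I expect establishing this correspondence---really the injectivity of $H\mapsto\Psi_H$ and the fact that the completely positive order matches the L\"owner order on the admissible $H$---to be the main obstacle; it is cleanest to obtain from the minimal Stinespring dilation of $\phi$, where dominated completely positive maps correspond to positive operators in the commutant $I_n\otimes M_k$ of the representation.

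With this in hand both implications are short. I would first reformulate extremality: $\phi$ is extreme in $CP[M_n,M_m;K]$ if and only if the only Hermitian-preserving $\delta$ with $\phi+\delta$ and $\phi-\delta$ both completely positive and $\delta(I_n)=0$ is $\delta=0$, using $\phi=\tfrac12\bigl((\phi+\delta)+(\phi-\delta)\bigr)$. Any such $\delta$ obeys $-\phi\leq\delta\leq\phi$, so applying the Radon--Nikodym lemma to $\tfrac12(\phi+\delta)$ writes $\delta=\Psi_H$ for a Hermitian $H$ with $-I_k\leq H\leq I_k$, and the constraint $\delta(I_n)=0$ becomes $\sum_{i,j}h_{ij}V_i^*V_j=0$. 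Thus $\phi$ fails to be extreme exactly when there is a nonzero Hermitian $H$ annihilating $\{V_i^*V_j\}$.

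Finally I would close the loop with a linear-algebra step: the existence of a nonzero \emph{Hermitian} $H$ with $\sum_{i,j}h_{ij}V_i^*V_j=0$ is equivalent to any nonzero complex dependence, since from $\sum_{i,j}c_{ij}V_i^*V_j=0$ one produces the Hermitian coefficient matrices $C+C^*$ and $i(C-C^*)$, at least one nonzero. Hence $\phi$ is non-extreme if and only if $\{V_i^*V_j\}_{i,j}$ is linearly dependent. Reading this through the equivalence of the first paragraph yields the theorem: an extreme $\phi$ has a minimal representation with $\{V_i^*V_j\}$ independent and $\sum_iV_i^*V_i=\phi(I_n)=K$, while conversely any representation with $\{V_i^*V_j\}$ independent is minimal and forces extremality.
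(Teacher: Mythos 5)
Your proof is correct, and it is essentially the paper's own route: the paper cites \cite{CHOI1975285} for Theorem \ref{choi} itself, but its proof of the generalisation, Theorem \ref{extreme}, has exactly your skeleton --- a minimal (linearly independent) Kraus representation, reduction of an arbitrary dependence $\sum_{ij}\lambda_{ij}V_i^*V_j=0$ to a Hermitian one via the adjoint trick, scaling so that $-I\leq(\lambda_{ij})\leq I$, the perturbed maps $\Psi_{\pm}=\Psi_{I\pm H}$ shown completely positive by factoring $I\pm H=(\alpha_{ij})^*(\alpha_{ij})$, and minimality forcing the perturbation to vanish. The only difference is packaging: where you quote Choi's Radon--Nikodym theorem (dominated CP maps correspond to $0\leq H\leq I$ in the commutant $I_n\otimes M_k$ of the minimal Stinespring dilation) as a black box, the paper inlines that same argument through explicit isometry relations between Kraus representations.
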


Using the same proof it was noted in \cite{LANDAU1993107} that for the set of unital, trace-preserving CP maps $\phi$ is an extreme point if and only if it has Kraus operators $\{V_i\}_i$ such that $\{V_i^*V_j\bigoplus V_jV_i^*\}_{ij}$ is linearly independent.

For a CP map $\phi: M_n\to M_m$ with Kraus representation $\phi (A) = \sum_i V_i^* A V_i$ the dual map $\phi^* :M_m \to M_n$ is given by $\phi^* (B) = \sum_i V_i B V_i^*$.

\begin{defn}
Let $\mathcal{S}$ be a subspace of $M_n$, and let $\mathcal{T}$ be a subspace of $M_m$. For a CP map $\Phi:M_n \to M_m$ define the convex set $CP[M_n,M_m; \mathcal{S}, \mathcal{T}, \Phi]$ to be CP maps from $M_n$ to $M_m$ which are equal to $\Phi$ on $\mathcal{S}$ and whose duals are equal to the dual of $\Phi$ on $\mathcal{T}$.
\end{defn}

The proof of the following makes use of the same approaches as the proof of Theorem \ref{choi} from \cite{CHOI1975285}. Namely, it uses the fact that a CP map has a minimal set of Kraus operators such that they are linearly independent and any other set can be related to it via an isometry.

\begin{thm}\label{extreme}
A map $\phi \in CP[M_n,M_m; \mathcal{S}, \mathcal{T}, \Phi]$ is extreme if and only if it admits and expression $\phi (A) = \sum_{i}V_i^*AV_i$ for all $A \in M_n$ such that for any self-adjoint spanning sets $\{A_k\}_k$ for $\mathcal{S}$ and $\{B_l\}_l$ for $\mathcal{T}$ the set $$\{\bigoplus_{k} V_i^*A_kV_j \bigoplus_{l}V_j B_l V_i^*\}_{ij}$$ is linearly independent. 
\end{thm}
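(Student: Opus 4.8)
The plan is to reduce extremality in $CP[M_n,M_m;\mathcal{S},\mathcal{T},\Phi]$ to a linear‑algebra condition on a \emph{minimal} Kraus representation, following Choi's strategy. Fix such a representation $\phi(A)=\sum_{i=1}^{r}V_i^*AV_i$, so the $V_i$ are linearly independent; equivalently the vectors $\kappa_i$ obtained by vectorising the $V_i$ are linearly independent, they span $\operatorname{range}(C_\phi)$, and $C_\phi$ is strictly positive there. The engine of the proof is the standard Choi fact that a Hermitian‑preserving map $\psi$ with $\operatorname{range}(C_\psi)\subseteq\operatorname{range}(C_\phi)$ is exactly a map $\psi_H(A)=\sum_{ij}h_{ij}V_i^*AV_j$ for a Hermitian $H=(h_{ij})$, and that for such $\psi$ both $\phi+\epsilon\psi$ and $\phi-\epsilon\psi$ are CP once $\epsilon>0$ is small. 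Using the dual convention $\phi^*(B)=\sum_iV_iBV_i^*$ one computes $\psi_H^*(B)=\sum_{ij}h_{ij}V_jBV_i^*$, so the two defining constraints of the set are preserved by $\phi\pm\epsilon\psi_H$ precisely when
\begin{equation*}
\psi_H(A_k)=\sum_{ij}h_{ij}V_i^*A_kV_j=0\quad(\forall k),\qquad \psi_H^*(B_l)=\sum_{ij}h_{ij}V_jB_lV_i^*=0\quad(\forall l).
\end{equation*}
Recognising $W_{ij}=\bigoplus_kV_i^*A_kV_j\,\bigoplus_lV_jB_lV_i^*$, these are exactly $\sum_{ij}h_{ij}W_{ij}=0$.

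For the forward direction I argue contrapositively on the minimal representation. If $\{W_{ij}\}_{ij}$ is linearly dependent there is a nonzero complex $C=(c_{ij})$ with $\sum_{ij}c_{ij}W_{ij}=0$. Taking operator adjoints of the two resulting families of equations and using that the $A_k,B_l$ are self‑adjoint shows $\tilde C=(\overline{c_{ji}})$ solves them too; hence one of the Hermitian combinations $C+\tilde C$ or $i(C-\tilde C)$ is a nonzero Hermitian solution $H$ of the displayed conditions. Because the $V_i$ are independent the maps $A\mapsto V_i^*AV_j$ are independent, so $\psi_H\neq0$, and then $\phi=\tfrac12[(\phi+\epsilon\psi_H)+(\phi-\epsilon\psi_H)]$ realises $\phi$ as a proper convex combination inside the set. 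Thus a non‑independent minimal representation forces $\phi$ non‑extreme, i.e. an extreme $\phi$ admits the asserted representation.

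Conversely, assume $\phi(A)=\sum_iV_i^*AV_i$ is any representation with $\{W_{ij}\}$ independent, and suppose $\phi=\tfrac12(\phi_1+\phi_2)$ with $\phi_1,\phi_2$ in the set. Then $\psi=\tfrac12(\phi_1-\phi_2)$ is Hermitian‑preserving with $\phi\pm\psi$ both CP, so $-C_\phi\le C_\psi\le C_\phi$ gives $\operatorname{range}(C_\psi)\subseteq\operatorname{range}(C_\phi)$; writing $C_\psi$ over the $\kappa_i$ yields $\psi=\psi_H$ for some Hermitian $H$. Agreement of $\phi_1,\phi_2$ with $\Phi$ on $\mathcal{S}$ and of their duals with $\Phi^*$ on $\mathcal{T}$ forces the displayed conditions, so $\sum_{ij}h_{ij}W_{ij}=0$; independence gives $H=0$, hence $\psi=0$ and $\phi_1=\phi_2$, so $\phi$ is extreme. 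I would also note that the kernel of $C\mapsto(\sum_{ij}c_{ij}V_i^*A_kV_j)_k\oplus(\sum_{ij}c_{ij}V_jB_lV_i^*)_l$ depends only on $\mathcal{S}=\operatorname{span}\{A_k\}$ and $\mathcal{T}=\operatorname{span}\{B_l\}$, so the independence of $\{W_{ij}\}$ is insensitive to the chosen self‑adjoint spanning sets and the phrase ``for any'' is unambiguous.

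The main obstacle is the perturbation dictionary of the first paragraph: proving that the constraint‑preserving directions at $\phi$ are \emph{exactly} the $\psi_H$ with Hermitian $H$ obeying the displayed conditions, and that these keep the perturbed map completely positive. This rests on the support‑and‑positivity analysis of the Choi matrix together with minimality of the Kraus operators; once it is in place the two implications are bookkeeping, the only delicate point being the self‑adjoint symmetrisation $C\mapsto\tilde C$ that lets one pass between arbitrary complex dependencies among $\{W_{ij}\}$ and genuine Hermitian perturbation directions.
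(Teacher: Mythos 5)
Your proof is correct, and while it follows the same overall Choi-style blueprint as the paper --- fix a minimal (linearly independent) Kraus family $\{V_i\}$, symmetrise a putative dependency to a Hermitian coefficient matrix (your passage from $C$ to $\tilde C=(\overline{c_{ji}})$ is exactly the paper's $\lambda_{ij}\pm\overline{\lambda}_{ji}$ step, using self-adjointness of the $A_k,B_l$), and perturb $\phi$ inside the constraint set --- the mechanism differs in both directions. In the forward direction the paper scales so that $-I\le(\lambda_{ij})\le I$, factors $I+(\lambda_{ij})=(\alpha_{ij})^*(\alpha_{ij})$ to exhibit explicit Kraus operators $W_i=\sum_j\alpha_{ij}V_j$ for $\Psi_{\pm}$, and then, after extremality forces $\phi=\Psi_+$, invokes the uniqueness-up-to-isometry of minimal Kraus representations to conclude $(\lambda_{ij})=0$; you instead argue contrapositively, certify complete positivity of $\phi\pm\epsilon\psi_H$ by strict positivity of $C_\phi$ on its range together with $\operatorname{range}(C_{\psi_H})\subseteq\operatorname{range}(C_\phi)$, and get non-extremality directly from $\psi_H\neq 0$ (independence of the maps $A\mapsto V_i^*AV_j$), so you never need the factorisation or the isometry lemma. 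In the converse the paper expands the components' Kraus operators as $W_p=\sum_i\alpha_{pi}V_i$ and uses the constraints plus independence of $\{W_{ij}\}$ to force $\sum_p\overline{\alpha}_{pi}\alpha_{pj}=\delta_{ij}$, i.e.\ the relating matrix is an isometry, so each component equals $\phi$; you pass to the difference map $\psi=\tfrac12(\phi_1-\phi_2)$, sandwich $-C_\phi\le C_\psi\le C_\phi$ to place $\operatorname{range}(C_\psi)$ inside $\operatorname{range}(C_\phi)$, write $\psi=\psi_H$, and kill $H$ by the independence hypothesis. The trade-off: the paper's route is constructive (explicit Kraus operators for the perturbed maps, no $\epsilon$) and hews closest to Choi's original argument in Theorem \ref{choi}, while yours routes both implications through a single ``tangent direction'' lemma ($\psi=\psi_H$ with $\sum_{ij}h_{ij}W_{ij}=0$), and usefully records a point the paper leaves implicit --- that independence of $\{W_{ij}\}$ depends only on the spans $\mathcal{S},\mathcal{T}$ and not on the chosen self-adjoint spanning sets, so the ``for any'' quantifier in the statement is well posed. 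One small bookkeeping item you elide in the converse (harmless, and noted explicitly by the paper): independence of $\{W_{ij}\}$ already forces $\{V_i\}$ to be linearly independent, which makes the Hermitian $H$ in $\psi=\psi_H$ unique.
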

\begin{proof}

For the forward direction, suppose $\phi \in CP[M_n,M_m; \mathcal{S}, \mathcal{T}, \Phi]$ is extreme and take a minimal set of Kraus operators i.e. a linearly independent set $\{V_i\}_i$ with $\phi (A) = \sum_i V_i^* (A) V_i$. Choose self-adjoint spanning sets $\{A_k\}_k$ for $\mathcal{S}$ and $\{B_l\}_l$ for $\mathcal{T}$. Suppose there exist constants $\{\lambda_{ij}\}_{ij}$ such that $$\sum_{ij}\lambda_{ij}\bigoplus_{k} V_i^*A_kV_j \bigoplus_{l}V_j B_l V_i^* =0.$$

By taking the adjoint of this sum we see that $\{\overline{\lambda}_{ji}\}_{ij}$ is another set satisfying this. This implies $\{\lambda_{ij}\pm \overline{\lambda}_{ji} \}_{ij}$ do as well and if we show both these sets are the zero set then it will imply $\{\lambda_{ij}\}_{ij} = \{0\}$. Thus we may assume $(\lambda_{ij})_{ij}$ is a self-adjoint matrix. Also scale so that $-I\leq (\lambda_{ij})_{ij}\leq I$.

Define maps $\Psi_{\pm}: M_n \xrightarrow[]{}M_m$ via $$\Psi_{\pm}(A) = \sum_i V_i^* AV_i \pm \sum_{ij}\lambda_{ij}V_i^*AV_j.$$
Let $I+ (\lambda_{ij})_{ij} = (\alpha_{ij})^*(\alpha_{ij})$ so that $\sum_k \overline{\alpha}_{ki}\alpha_{kj} = \lambda_{ij}+ \delta_{ij}1$. Then if $W_i = \sum_i \alpha_{ij}V_j$ we can compute to get $\Psi_+ = \sum_i W_i^* A W_i$ and we can do similar for $\Psi_-$. This also shows that
$$\Psi_{\pm}^*(B) = \sum_{i}V_i BV_i^* \pm \sum_{ij}\lambda_{ij} V_j BV_i^*.$$
Thus $\Psi_{\pm}$ are in $CP[M_n,M_m; \mathcal{S}, \mathcal{T}, \phi]$.

We now have $\phi = \frac{1}{2}(\Psi_+ + \Psi_-)$ and so since it is extreme $\phi = \Psi_+$. The minimality of the set $\{V_i\}$ implies that $(\alpha_{ij})_{ij}$ is an isometry which gives $(\lambda_{ij})_{ij} = 0$ and we are done.

Conversely, assume $\phi$ has form $\phi (A) = \sum_i V_i^* AV_i$ and $$\{ \bigoplus_{k} V_i^*A_kV_j \bigoplus_{l}V_j B_l V_i^*\}_{ij}$$ is linearly independent for any self-adjoint spanning sets $\{A_k\}_k$ for $\mathcal{S}$ and $\{B_l\}_l$ for $\mathcal{T}$. This implies $\{V_i\}_i$ is linearly independent since $\sum_i \lambda_i V_i = 0$ would imply for any arbitrary summand that $\sum_{ij}\lambda_i V_j^*CV_i = 0$.

If $\phi$ is not extreme, say $\phi = \frac{1}{2}(\Psi_1 +\Psi_2)$ for $\Psi_1 (A) = \sum_p W_p^* A W_p$ and $\Psi_2 (A) = \sum_q Z_q^* A Z_q$, then we can write $W_p$ and $Z_q$ in terms of $V_i$. But if $W_p =\sum_{i}\alpha_{pi} V_i $ we have

$$\sum_i V_i^*A_kV_i = \sum_p W_p^* A_k W_p = \sum_{pij}\overline{\alpha}_{pi}\alpha_{pj}V_i^* A_k V_j$$
for any $A_k$ in the spanning set. Similarly for the dual we have $\sum_{i}V_i B_k V_i^* = \sum_{pij}\alpha_{pj}\overline{\alpha}_{pi}V_jB_k V_i^* $. Therefore $\sum_{p}\overline{\alpha}_{pi}\alpha_{pj} = \delta_{ij}$ or else we would have a linear dependency. This implies $(\alpha_{pi})_{pi}$ is an isometry and so the Kraus operators it relates define the same map; i.e., $\phi = \Psi_1$ and so $\phi$ is extreme.
\end{proof}

\begin{remark}
This immediately applies to the set of superchannels $\widetilde{S}: M_{d_1}(M_{r_1}) \xrightarrow[]{} M_{d_2}(M_{r_2})$ which are extensions of the same QSC; i.e., are equal on the space $\mathcal{S} = S(d_1,r_1)$. A spanning set consisting of Choi matrices of quantum channels may be chosen. In this case the space $\mathcal{T}$ may be chosen to be zero. For a trace-preserving superchannel we can consider the set of TP extensions of the underlying QSC, these are the ones with $\mathcal{T}=\text{span}\{I_{d_2r_2}\}$ being sent to $\text{span}\{I_{d_1r_1}\}$ (since $\phi$ being trace-preserving is equivalent to $\phi^*$ being unital). As shown for some QSCs this set of extensions is empty.
\end{remark}

\begin{exam}[Unitary superchannels]
If $U_1 \in M_{d_1}$ and $U_2 \in M_{r_1}$ are unitaries then it's easy to see conjugation by $U_1 \otimes U_2$ is a superchannel since if $\phi$ is a TP map, then
\begin{equation*}
    \begin{split}
        \Tr_{r_2}[U_1 \otimes U_2 (C_{\phi})(U_1 \otimes U_2)^*] &= \Tr_{r_2}[U_1 \otimes U_2 (\sum_{ij}E_{ij} \otimes \phi (E_{ij}))U_1^* \otimes U_2^*] \\
        &=\sum_{ij}(U_1 E_{ij}U_1^*)\cdot \Tr(U_2 \phi (E_{ij})U_2^*) \\
        &= U_1 (\sum_{i}E_{ii})U_1^* = I_{d_2},
    \end{split}
\end{equation*}
so it satisfies the TP-preserving condition.
\end{exam}
In fact every unitary superchannel is of this form. Let $\mathcal{U}(n)$ denote the unitaries in $M_n$.
\begin{thm}
If $U \in \mathcal{U}(dr)$ is a unitary such that the map $\widetilde{S}:M_d (M_r) \xrightarrow[]{}M_d (M_r)$ with $\widetilde{S}(C) = UCU^*$ is a superchannel then there exists unitaries $U_1 \in \mathcal{U}(d)$ and $U_2 \in \mathcal{U}(r)$ such that $U=U_1 \otimes U_2$.
\end{thm}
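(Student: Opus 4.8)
The plan is to exploit the block structure of $U$ that is forced by the characterisation of superchannels in equation~(\ref{1}). Write $U = \sum_{a,b} E_{ab}\otimes U_{ab}$ with blocks $U_{ab}\in M_r$, viewing $U$ as an element of $M_d\otimes M_r$ whose first tensor factor carries the input index of Choi matrices $C_\phi = \sum_{ij}E_{ij}\otimes\phi(E_{ij})$. Since $\widetilde{S}$ is a superchannel it induces a unital CP map $\mathcal{N}:M_d\to M_d$ with $\Tr_r(UCU^*) = \mathcal{N}(\Tr_r C)$ for every $C\in M_d(M_r)$. Substituting $C = E_{ij}\otimes X$ and expanding, the left-hand side becomes $\sum_{a,c}E_{ac}\,\Tr(U_{cj}^* U_{ai}X)$ while the right-hand side is $\Tr(X)\,\mathcal{N}(E_{ij})$. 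Comparing the $(a,c)$ entries for every $X\in M_r$, and using that $\Tr(MX)=\lambda\Tr(X)$ for all $X$ forces $M=\lambda I_r$, yields the key relation
\[
U_{cj}^* U_{ai} = \bigl(\mathcal{N}(E_{ij})\bigr)_{ac}\, I_r \qquad \text{for all } a,c,i,j,
\]
so that every overlap of two blocks of $U$ is a scalar multiple of the identity.

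Next I would read off the structure of the individual blocks. Taking $a=c$ and $i=j$ gives $U_{aj}^* U_{aj} = (\mathcal{N}(E_{jj}))_{aa}\, I_r$, a nonnegative scalar multiple of $I_r$; hence each block $U_{aj}$ is either zero or a positive scalar times a unitary in $\mathcal{U}(r)$. Fix a nonzero block, say $U_{a_0 p_0} = \alpha_0 W$ with $\alpha_0>0$ and $W\in\mathcal{U}(r)$. For an arbitrary nonzero block $U_{ap}=\alpha_{ap}W_{ap}$ the relation gives $U_{a_0 p_0}^* U_{ap} = \alpha_0\alpha_{ap}\, W^* W_{ap} = (\mathcal{N}(E_{p p_0}))_{a a_0}\, I_r$; since the left side is a positive multiple of the unitary $W^* W_{ap}$ it cannot vanish, so $W^* W_{ap}$ is a phase multiple of $I_r$ and $W_{ap}=\gamma_{ap}W$ with $|\gamma_{ap}|=1$. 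Thus every block has the form $U_{ap}=\beta_{ap}W$ for scalars $\beta_{ap}\in\mathbb{C}$ (with $\beta_{ap}=0$ for the zero blocks), all sharing the single unitary $W$.

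Finally I would assemble these into a tensor factorisation. Setting $B=(\beta_{ap})\in M_d$, the block description reads $U = \sum_{a,p}E_{ap}\otimes\beta_{ap}W = B\otimes W$. Unitarity then gives $I_{dr}=U^*U=(B^*B)\otimes(W^*W)=(B^*B)\otimes I_r$, so $B^*B=I_d$ and $B$ is unitary. Taking $U_1=B\in\mathcal{U}(d)$ and $U_2=W\in\mathcal{U}(r)$ yields $U=U_1\otimes U_2$, as required.

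I expect the main obstacle to lie in the first paragraph: correctly extracting the block identity $U_{cj}^* U_{ai}=(\mathcal{N}(E_{ij}))_{ac}\,I_r$ from the characterisation, which requires careful bookkeeping of the tensor conventions in $UCU^*$ and the observation that matching $\Tr(\,\cdot\,X)$ against $\Tr(X)$ pins the block product to a scalar. Once this relation is in hand, the rigidity is essentially automatic: each block is a scalar multiple of a unitary, all nonzero blocks share one unitary because their pairwise overlaps are nonvanishing scalar multiples of the identity, and the factorisation $U=U_1\otimes U_2$ follows immediately.
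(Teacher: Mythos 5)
Your proof is correct, but it reaches the conclusion by a genuinely different route from the paper's. Both arguments begin from the same structural input: you use Equation~(\ref{1}), $\Tr_{r}(UCU^*)=\mathcal{N}(\Tr_{r}C)$, while the paper uses its dual form from \cite[Lemma 2]{original}, $\widetilde{S}^*(\rho\otimes I_{r})=N^*(\rho)\otimes I_{r}$; under the trace pairing these are the same fact. The divergence is in what happens next. The paper converts the dual relation into the orthogonality condition $\Tr[U(X\otimes Z)U^*(Y\otimes I_r)]=0$ for all $X,Y\in M_d$ and traceless $Z\in M_r$, and then outsources the factorisation entirely to \cite[Theorem 2.3]{2016JPhA...49G5301D}. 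You instead prove the factorisation by hand: your block computation of $\Tr_r\bigl(U(E_{ij}\otimes X)U^*\bigr)$ is correct and the resulting identity $U_{cj}^*U_{ai}=(\mathcal{N}(E_{ij}))_{ac}I_r$ does force each block to be a nonnegative multiple of a unitary; moreover the overlap of any two nonzero blocks is an invertible matrix (a positive multiple of a product of unitaries) that must equal a scalar multiple of $I_r$, so the scalar is nonzero and all nonzero blocks are phase multiples of a single unitary $W$, whence $U=B\otimes W$ and $U^*U=I_{dr}$ makes $B$ unitary. In effect you have reproved, in this special case, the cited external theorem, so your argument is self-contained and elementary where the paper's is shorter but citation-dependent; your version also only needs $\mathcal{N}$ to exist as a linear map, not that it is unital or CP. The one point worth making explicit if you write this up is why Equation~(\ref{1}) may be applied to inputs like $E_{ij}\otimes X$ that are not Choi matrices of channels: every matrix in $M_d(M_r)$ is the Choi matrix of some linear map, and the superchannel relation holds on all of $\mathcal{L}(M_d,M_r)$, which is how the paper and \cite{original} state it — your substitution uses this silently but legitimately.
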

\begin{proof}
In \cite[Lemma 2]{original} it is shown that the dual of a superchannel must satisfy
$$\widetilde{S}^* (\rho \otimes I_{r_2}) = N^* (\rho) \otimes I_{r_1}, \quad \rho \in M_{d_2}$$
for some quantum channel $N^* : M_{d_2}\xrightarrow[]{}M_{d_1}$. Now for any $X, Y \in M_d$ and $Z \in M_{r}$ with $\Tr Z = 0$ we have
\begin{equation*}
    \begin{split}
        \Tr [U(X\otimes Z)U^* (Y \otimes I_r)] = \Tr[(X\otimes Z) (N^* (Y) \otimes I_{r})] = 0.
    \end{split}
\end{equation*}
Now by \cite[Theorem 2.3]{2016JPhA...49G5301D} we are done.
\end{proof}
By Theorem \ref{extreme} any unitary superchannel $\widetilde{S}$ is an extreme point of the set of extensions of the underlying QSC. They will also always have minimal dimension $e$ for the characterisation theorem since the rank of $\Tr_{r_1}\Tr_{r_2}C_{\widetilde{S}}$ will be 1. To see this write the matrix units of $M_d(M_r)$ as $E_{ij}\otimes F_{kl}$, $1\leq i,j\leq d$, $1\leq k,l\leq r$, where $E_{ij} \in M_d$ and $F_{kl} \in M_r$ are the standard matrix units in their spaces. 
Then since $S(C) = U_1\otimes U_2 C U_1^* \otimes U_2^*$ for $U_1 \in \mathcal{U}(d)$, $U_2\in \mathcal{U}(r)$ we have
$$C_{\widetilde{S}} = \sum_{i,j}\sum_{k,l} E_{i,j}\otimes F_{k,l} \otimes U_1E_{i,j}U_1^* \otimes U_2 F_{k,l}U_2^*.$$
Now applying $\Tr_{r_1}\Tr_{r_2}$ traces out the 2nd and 4th term giving
\begin{equation*}
\begin{split}
\Tr_{r_1}\Tr_{r_2}C_{\widetilde{S}} &=r\cdot \sum_{i,j} E_{i,j}\otimes U_1 E_{i,j}U_1^* \\
&= \text{Diag}(U_1,\ldots, U_1) \sum_{i,j}E_{i,j} \otimes E_{i,j}\text{Diag}(U_1^*,\ldots, U_1^*).
\end{split}
\end{equation*}
Since $\text{Diag}(U_1,\ldots, U_1)$ has full rank and $\sum_{i,j}E_{i,j}\otimes E_{i,j}$ has rank 1 the overall matrix has rank 1.

\section{Conclusion}
Our results show that defining superchannels to act on the space of quantum channels gives a different class of maps in comparison to the original definition of superchannels. The standard definition of superchannel can be recovered by extending to the full set of linear maps and this extension is not unique. Therefore many different quantum superchannels can restrict to the same QSC, which means they are effectively the same as maps on channels. Thus, if we are really only concerned with the action of a superchannel on quantum channels, then we are really only concerned with the corresponding QSC.

It would be interesting to see if there is a ``best" choice of extension. For example it may be that the minimal dimension $e$ for superchannel characterisation occurs for extreme points of the set of extensions, but this was not proved here and is an open question. It was shown that TP extensions are not always available. It is unclear what the restrictions are on the choice of extension.

Not much is known about the operator system of quantum channels. It might be worth considering how the action of a map on this space determines the form of its possible Stinespring representations, and whether this affects the characterisation of superchannels.

\subsubsection*{Acknowledgements} The author would like to thank Vern Paulsen, David Kribs, and Michael Brannan for helpful comments and discussion. This work was completed as part of the authors PhD thesis.

\nocite{*}
\bibliographystyle{amsplain}
\bibliography{refer}

\end{document}